\documentclass[a4paper]{article}
\usepackage{amsmath}

\setcounter{MaxMatrixCols}{10}

\newtheorem{theorem}{Theorem}

\newtheorem{proposition}{Proposition}

\newtheorem{example}{Example}
\newenvironment{proof}[1][Proof]{\noindent\textbf{#1.} }{\ \rule{0.5em}{0.5em}}

\begin{document}

\title{\textbf{Cyclic Codes over Some Finite Rings}}
\author{Mehmet \"{O}zen, Murat G\"{u}zeltepe \\
{\small Department of Mathematics, Sakarya University, TR54187 Sakarya,
Turkey}}
\date{}
\maketitle

\begin{abstract}
In this paper cyclic codes are established with respect to the
Mannheim metric over some finite rings by using Gaussian integers
and the decoding algorithm for these codes is given.
\end{abstract}


\bigskip \textsl{AMS Classification:}{\small \ 94B05, 94B60}

\textsl{Keywords:\ }{\small Block codes, Mannheim distance, Cyclic
codes, Syndrome decoding}

\section{Introduction }

Mannheim metric, which was initially put forward by Huber 1994,
has been used in many papers so far \cite{1,2,3,4,5}. In 1994,
Huber defined the Mannheim metric and the Mannheim weight over
Gaussian integers and, eventually, he obtained the linear codes
which can correct errors of Mannheim weight one in \cite{1}.
Moreover, some of these codes which are suited for quadrature
amplitude modulation (QAM)-type modulations were considered by
Huber. Later, Huber transferred these codes, which he obtained by
using the Mannheim metric and Gaussian integers, into
Eisenstein-Jacobi integers in \cite{2}. In 1997, Huber proved the
MacWilliams theorem for two-dimensional modulo metric (Mannheim
metric) \cite{3}. In 2001, Neto obtained new codes over Euclidean
domain  $ \mathcal {Q}\sqrt d  $ , where $ d =  - 1, - 2, - 3, -
7, - 11 $  , in \cite{4} by using the Mannheim metric given in
\cite{1,2}. In 2004, Fan and Gao obtained one-error correcting
linear codes by using a novel Mannheim weight over finite
algebraic integer rings of cyclotomic fields \cite{5}. In our
study, the codes in \cite{1} are transferred into some finite ring
by using the Mannheim metric and Gaussian integers. Since these
codes are transferred from finite fields into finite rings, there
occurs some difference in decoding process and we also mention
these differences.\\

Section II is organized as follows. In Proposition 1, the
necessary algebraic background is revealed in order to obtain
cyclic codes. In Theorem 2, it is shown how to obtain cyclic codes
by utilizing Proposition 1, 2 and 3. In Proposition 4, the
algebraic background, which is essential for obtaining cyclic code
over the other finite rings, is arranged and in Theorem 3, it is
shown how to obtain cyclic codes over the other finite rings.

\section{Cyclic Codes over Gaussian Integer}
A Gaussian integer is a complex number whose real and imaginary
parts are both integers. Let $ G $ denote the set of all Gaussian
integers and let $ G_\pi $ denote residue class of $ G $
modulo$\pi$, where $ \pi  = a + ib $ is a Gaussian prime integer
and $ p $ is a prime integer such that $ p = a^2  + b^2  = 4n + 1$
The modulo function $ GF(p) \to G_\pi $ is defined by
\begin{equation}  \label{eq:1}  \mu (g) =
g - [\frac{{g.\pi ^* }}{p}].\pi,
\end{equation}
where $ GF(p) $ is a finite field with $p$ elements. elements. In
(\ref{eq:1}), the symbol of [.] is rounding to the closest
integer. The rounding of Gaussian integer can be done by rounding
the real and imaginary parts separately to the closest integer. In
view of equation (\ref{eq:1}), $ G_\pi $ is isomorphic to $ Z_p $,
where $ Z_p $ is residue class of the set $Z$ of all integers
modulo $p$. Let $ \alpha$ and $\beta $ be elements in $ G_\pi $
then the Mannheim weight of  $ \gamma $ is defined by $ w_M
(\gamma ) = \left| {{\mathop{\rm Re}\nolimits} (\gamma )} \right|
+ \left| {{\mathop{\rm Im}\nolimits} (\gamma )} \right| $,  where
$ \gamma  = \alpha  - \beta \;\bmod \pi $. Since the linear codes
are linear code, the Mannheim distance between $ \alpha $ and
$\beta$ is $ d_M (\alpha ,\beta ) = w_M (\gamma )$ \cite{1}.

\begin{theorem}
\label{thm:one} If $a$ and $b$ are relatively prime integers, then
$ {{G = Z[i]} \mathord{\left/ {\vphantom {{G = Z[i]} {\left\langle
{a + ib} \right\rangle  \cong Z_{a^2  + b^2 } }}} \right.
 \kern-\nulldelimiterspace} {\left\langle {a + ib} \right\rangle  \cong Z_{a^2  + b^2 } }}
$ \cite{6}.
\end{theorem}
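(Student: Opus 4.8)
The plan is to exhibit an explicit ring homomorphism $\phi\colon Z[i]\to Z_{a^2+b^2}$ and then identify its kernel with the principal ideal $\langle a+ib\rangle$, so that the First Isomorphism Theorem delivers the claim. Write $m=a^2+b^2$. The starting point is to use the hypothesis: since $\gcd(a,b)=1$ we also have $\gcd(b,m)=\gcd(b,a^2)=1$, so $b$ is invertible modulo $m$; write $b^{-1}$ for its inverse and set $r\equiv -ab^{-1}\pmod m$. A one-line check gives $r^2\equiv a^2b^{-2}\equiv -b^2b^{-2}\equiv -1\pmod m$, where I have used $a^2\equiv -b^2\pmod m$. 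Because $r$ satisfies the same defining relation $r^2=-1$ as the imaginary unit, the assignment $\phi(x+iy)=x+ry\bmod m$ is forced to be a well-defined ring homomorphism.

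Next I would record the two easy properties. First, $\phi$ is surjective, since every residue class is already the image of an integer (take $y=0$). Second, $a+ib\in\ker\phi$, because $\phi(a+ib)=a+rb\equiv a-ab^{-1}b\equiv 0\pmod m$; hence $\langle a+ib\rangle\subseteq\ker\phi$. The substance of the argument is the reverse inclusion $\ker\phi\subseteq\langle a+ib\rangle$. Given $x+iy$ with $\phi(x+iy)=0$, i.e.\ $x\equiv -ry\equiv ab^{-1}y\pmod m$, I would test divisibility by writing $\dfrac{x+iy}{a+ib}=\dfrac{(xa+yb)+i(ya-xb)}{a^2+b^2}$ and verifying that the numerator's two coordinates are both divisible by $m$: the relation $xb\equiv ay\pmod m$ gives $m\mid(ya-xb)$, and substituting $x\equiv ab^{-1}y$ gives $xa+yb\equiv b^{-1}y(a^2+b^2)\equiv 0\pmod m$. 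Thus $a+ib$ divides $x+iy$ in $Z[i]$, which yields $\ker\phi=\langle a+ib\rangle$ and hence the desired isomorphism.

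I expect this last divisibility check to be the main obstacle, and it is precisely the place where the coprimality hypothesis is indispensable, both to invert $b$ modulo $m$ and to make the two congruences close up to $0$. An alternative route for this step, should the direct computation prove awkward, is a counting argument: one shows $a+ib\in\ker\phi$ as above, factors $\phi$ through $\bar\phi\colon Z[i]/\langle a+ib\rangle\to Z_{m}$, and then invokes the fact that the index of $\langle a+ib\rangle$ equals its norm $a^2+b^2=|Z_m|$; a surjection between finite sets of equal cardinality is a bijection. A third, more structural, variant is to use the presentation $Z[i]\cong Z[x]/(x^2+1)$, reduce the problem to $Z[x]/(x^2+1,\,a+bx)$, and exploit a Bezout identity $sa+tb=1$ to show that the image of $x$ is the integer $sb-ta$, thereby collapsing the quotient onto $Z_{m}$. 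Any of these completes the proof once combined with the surjectivity already established.
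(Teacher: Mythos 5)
Your argument is correct and complete: the coprimality of $a$ and $b$ is used exactly where it must be (to invert $b$ modulo $m=a^2+b^2$, hence to define $r$ with $r^2\equiv -1$), the map $\phi(x+iy)=x+ry$ is a well-defined surjective ring homomorphism by the universal property of $Z[i]\cong Z[x]/(x^2+1)$, and your divisibility computation $(x+iy)(a-ib)=(xa+yb)+i(ya-xb)$ correctly closes the inclusion $\ker\phi\subseteq\langle a+ib\rangle$. There is nothing in the paper to compare against: the authors state this theorem with only a citation to Dresden and Dymacek \cite{6} and supply no proof of their own, so your self-contained first-isomorphism-theorem argument (and either of your fallback variants, the norm-counting argument or the presentation via $Z[x]/(x^2+1,\,a+bx)$) fully does the job.
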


\begin{proposition} Let $\pi  = a + bi $ be a prime in $G$ and let
$ p > 2 $ be prime element in $Z$ such that $ p = a^2  + b^2  = 4n
+ 1$. If $G$ is a generator of $ G_{\pi ^2 }^ * $, then $ g^{\phi
(p^2 )/4}  \equiv i\;\bmod \pi ^2 $ ( or $ g^{\phi (p^2 )/4}
\equiv  - i\;\bmod \pi ^2 )$.

\end{proposition}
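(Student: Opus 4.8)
The plan is to recognize $G_{\pi^2}^{*}$ as a cyclic group of order $\phi(p^2)$ and to prove that $\{1,i,-1,-i\}$ is its unique subgroup of order $4$; the statement then follows because $g^{\phi(p^2)/4}$ is forced to be a generator of that subgroup.

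First I would determine the order of $G_{\pi^2}^{*}$. Since $N(\pi)=a^2+b^2=p$, the norm of $\langle\pi^2\rangle$ is $p^2$, so $G_{\pi^2}=G/\langle\pi^2\rangle$ has $p^2$ elements. As $G$ is a principal ideal domain and $\pi$ is prime, this quotient is a local ring whose maximal ideal is $\langle\pi\rangle/\langle\pi^2\rangle$, with residue field $G/\langle\pi\rangle\cong GF(p)$ by Theorem~\ref{thm:one} (note $\gcd(a,b)=1$ since $a^2+b^2=p$ is prime). The non-units are exactly the $|\langle\pi\rangle/\langle\pi^2\rangle|=p^2/p=p$ elements of the maximal ideal, so
\[
|G_{\pi^2}^{*}|=p^2-p=p(p-1)=\phi(p^2).
\]
By hypothesis $g$ generates $G_{\pi^2}^{*}$, so this group is cyclic and $g$ has order $\phi(p^2)$.

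Next I would produce two elements of order exactly $4$. Because $p=4n+1$ we have $4\mid(p-1)$, hence $4\mid\phi(p^2)$ and $m:=\phi(p^2)/4$ is an integer; since $g$ has order $\phi(p^2)$, the element $h:=g^{m}$ has order $4$. On the other hand $i$ is a unit of $G_{\pi^2}$ with inverse $-i$, and it too has order $4$: we have $i^4=1$, while $i\not\equiv 1$ and $i^2=-1\not\equiv 1$ modulo $\pi^2$, because $N(i-1)=2$ and $N(-2)=4$ are strictly less than $N(\pi^2)=p^2\ge 25$ and hence are not divisible by $\pi^2$. The same norm bound shows $1,i,-1,-i$ are pairwise distinct modulo $\pi^2$, so $\langle i\rangle=\{1,i,-1,-i\}$ is a subgroup of order $4$.

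Finally I would invoke the fact that a finite cyclic group has a unique subgroup of each order dividing its order. Thus $G_{\pi^2}^{*}$ has a unique subgroup of order $4$, which must equal both $\langle h\rangle$ and $\langle i\rangle$; the two elements of order $4$ in this cyclic $4$-group are precisely its generators $i$ and $i^{-1}=-i$. Since $h$ has order $4$ it is one of them, giving $g^{\phi(p^2)/4}\equiv i\pmod{\pi^2}$ or $g^{\phi(p^2)/4}\equiv -i\pmod{\pi^2}$. The only delicate part is the first paragraph, namely confirming that the unit group has order $\phi(p^2)$; once that and the cyclicity supplied by the hypothesis are in hand, the conclusion is forced by the uniqueness of the order-$4$ subgroup.
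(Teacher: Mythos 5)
Your proof is correct, but it reaches the conclusion by a different route than the paper. Both arguments rest on the same two pillars: that $g$ generates a cyclic group of order $\phi(p^2)=p(p-1)$, and that $i$ has order exactly $4$ modulo $\pi^2$ (ruled in by $i^4=1$ and ruled up from orders $1$ and $2$ by a norm bound, exactly as you do with $N(i-1)=2$ and $N(-2)=4$ versus $N(\pi^2)=p^2$). Where you diverge is the final step. The paper writes $i\equiv g^k$ for some $1\le k\le\phi(p^2)$, deduces $\phi(p^2)\mid 4k$, and then enumerates the possibilities $\phi(p^2)=k,2k,4k$, eliminating the first two by the norm bound; this is a direct discrete-logarithm computation (and, as written, it silently skips the case $4k=3\phi(p^2)$, which in fact just lands on the $-i$ alternative). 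You instead observe that $g^{\phi(p^2)/4}$ has order $4$, that $\{1,i,-1,-i\}$ is a subgroup of order $4$, and invoke the uniqueness of the order-$4$ subgroup of a cyclic group, so that $g^{\phi(p^2)/4}$ must be one of the two generators $\pm i$ of that subgroup. Your version buys a cleaner, case-free conclusion that automatically covers the alternative the paper's enumeration misses, at the cost of invoking the subgroup-uniqueness theorem rather than staying purely computational; you also supply the count $|G_{\pi^2}^{*}|=p^2-p$ from the local-ring structure of $G_{\pi^2}$, whereas the paper imports it from the isomorphism $G_{\pi^2}\cong Z_{p^2}$ of Theorem 1. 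Either source for that count is fine.
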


\begin{proof}
If $ \left| \pi  \right| = 4n + 1 $ is a prime integer in $Z$ ,
the real and imaginary parts of $ \pi ^2 $ are relatively primes,
where the symbol $ \left|  \cdot  \right| $  denotes modulo of a
complex number. So, $ G_{\pi ^2 } $ is isomorphic to $ Z_{p^2 } $
(See Theorem 1). If $g$ is a generator of $ G_{\pi ^2 }^ * $ ,
then $ g,\,g^2 ,\,...,\,g^{\phi (p^2 )} \;\bmod \pi ^2 $
constitute a reduced residue system. Therefore there is a positive
integer $k$ as $ g^k  \equiv i\;\bmod \pi ^2 $ ($ g^k  \equiv  -
i\;\bmod \pi ^2 $, where $ 1 \le k \le \phi (p^2 ) $. Hence, we
can infer $ g^{4k}  \equiv 1\;\bmod \pi ^2 $. Since $ \left. {\phi
(p^2 )} \right|4k $ and $ 4 \le 4k \le 4\phi (p^2 )$ , we obtain
$\phi (p^2 ) = k,\phi (p^2 ) = 2k{\rm}$ or $\phi (p^2 ) = 4k{\rm }
$. If $\phi (p^2 ) = k$ was equal to $k$ or $2k$ , we should have
$ \left. {\pi ^2 } \right|i - 1 $ or $ \left. {\pi ^2 }
\right|2\;\;(\left. {\pi ^2 } \right| - i - 1) $, but this would
contradict the fact that  $ \left| {\pi ^2 } \right|^2  > 2 $.
\end{proof}

\begin{proposition} Let $\pi  = a + ib $ be a prime in $G$ and let
$ p > 2 $ be prime in $Z$ such that $ p = a^2  + b^2  = 4n + 1 $.
If $g$  is a generator of $ G_{\pi ^k }^ * $, the $ g^{\phi (p^k
)/4}  \equiv i\;\bmod \pi ^k $ or ($ g^{\phi (p^k )/4}  \equiv -
i\;\bmod \pi ^k $).
\end{proposition}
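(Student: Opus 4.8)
The plan is to imitate the proof of Proposition 2 with the exponent $2$ replaced by the general $k$, after first securing the underlying ring isomorphism. I would begin by checking that the real and imaginary parts of $\pi^k$ are relatively prime, so that Theorem 1 applies and yields $G_{\pi^k} \cong Z_{p^k}$; in particular $G_{\pi^k}^*$ is cyclic of order $\phi(p^k) = p^{k-1}(p-1)$. For the coprimality I would argue that a common integer divisor $d > 1$ of $\mathrm{Re}(\pi^k)$ and $\mathrm{Im}(\pi^k)$ would satisfy $d \mid \pi^k$ in $Z[i]$, hence $d^2 \mid |\pi^k|^2 = p^k$, forcing $p \mid d$ and so $p \mid \pi^k$; writing $p = \pi\bar\pi$ with $\bar\pi = a - ib$ the conjugate prime, this gives $\bar\pi \mid \pi^k$, contradicting unique factorization in $Z[i]$ because $\pi$ and $\bar\pi$ are non-associate Gaussian primes when $p = a^2 + b^2 \equiv 1 \bmod 4$.

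Next I would use that $p \equiv 1 \bmod 4$, so $4 \mid p-1 \mid \phi(p^k)$ and $\phi(p^k)/4$ is an integer. Since $i$ is a unit modulo $\pi^k$ (as $i\cdot(-i) = 1$) and $g$ generates $G_{\pi^k}^*$, there is an integer $m$ with $1 \le m \le \phi(p^k)$ and $g^m \equiv i \bmod \pi^k$. Raising to the fourth power gives $g^{4m} \equiv i^4 \equiv 1 \bmod \pi^k$, so the order $\phi(p^k)$ of $g$ divides $4m$, and since $4 \le 4m \le 4\phi(p^k)$ this leaves $m \in \{\phi(p^k)/4,\ \phi(p^k)/2,\ 3\phi(p^k)/4,\ \phi(p^k)\}$.

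I would then discard two of these cases exactly as in Proposition 2: $m = \phi(p^k)$ gives $i \equiv g^{\phi(p^k)} \equiv 1$, i.e. $\pi^k \mid i - 1$, while $m = \phi(p^k)/2$ gives $-1 \equiv i^2 \equiv g^{\phi(p^k)} \equiv 1$, i.e. $\pi^k \mid 2$; both fail because $|i-1|^2 = 2$ and $|2|^2 = 4$ are strictly below $|\pi^k|^2 = p^k \ge 5$. The remaining case $m = \phi(p^k)/4$ gives $g^{\phi(p^k)/4} \equiv i$ at once, whereas $m = 3\phi(p^k)/4$ gives $g^{\phi(p^k)/4} \equiv g^m (g^{\phi(p^k)/2})^{-1} \equiv i\cdot(-1) \equiv -i$, using that $g^{\phi(p^k)/2} \equiv -1$ is the unique element of order two. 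Either way $g^{\phi(p^k)/4} \equiv i$ or $\equiv -i \bmod \pi^k$, as claimed.

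The main obstacle, and the only step requiring more than Proposition 2 supplied, is the coprimality of the real and imaginary parts of $\pi^k$ for arbitrary $k$, which Proposition 2 merely asserted for $k = 2$; the unique-factorization argument above settles it. For the concluding step I could instead avoid the case analysis on $m$ altogether: putting $h = g^{\phi(p^k)/4}$, one has $h^4 \equiv 1$ with $h$ of order exactly $4$, so $h$ is a primitive fourth root of unity, and since $G_{\pi^k}^* \cong Z_{p^k}^*$ is cyclic containing exactly the four fourth roots of unity $\{1,-1,i,-i\}$ (pairwise distinct modulo $\pi^k$ because $\pi^k \nmid 2$), of which only $i$ and $-i$ have order $4$, necessarily $h \equiv i$ or $h \equiv -i$.
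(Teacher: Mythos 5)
Your argument is correct, and in spirit it is the route the paper intends: the paper's own proof of this proposition is the single sentence that it is ``immediate'' from the $\pi^2$ case (its Proposition~1), whose proof you have reproduced with the exponent $2$ replaced by $k$. What you add beyond that is genuine, though. First, you actually verify that $\mathrm{Re}(\pi^k)$ and $\mathrm{Im}(\pi^k)$ are coprime, via unique factorization in $Z[i]$: a common divisor $d>1$ forces $p\mid d$ and hence $\bar\pi\mid\pi^k$, impossible since $\pi$ and $\bar\pi$ are non-associate Gaussian primes when $p\equiv 1\bmod 4$. The paper asserts the analogous coprimality only for $k=2$ and never proves it, and this is exactly the point that keeps the general statement from being literally immediate from the $k=2$ case. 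Second, you treat the divisor case $4m=3\phi(p^k)$, which the paper's list ($\phi=m$, $\phi=2m$, or $\phi=4m$) silently omits even though it is precisely the case that produces the $-i$ alternative appearing in the statement; your reduction $g^{\phi(p^k)/4}\equiv i\cdot(-1)^{-1}\equiv -i$ handles it correctly. Your closing observation --- that $g^{\phi(p^k)/4}$ has order exactly $4$ in a cyclic group whose only elements of order $4$ are $\pm i$ --- is a cleaner finish that avoids the case analysis altogether. In short: same strategy as the paper's template proof, but your write-up closes two gaps that the paper leaves open.
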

\begin{proof}This is immediate from Proposition 1.
\end{proof}

\begin{proposition}Let $\pi  = a + ib $ be a prime in $G$ and let
$ p > 2 $ be prime in $Z$ such that $ p = a^2  + b^2  = 4n + 1 $.
If $g$  is a generator of $ G_{\pi ^k }^ * $ and $ g^{\phi (p^2
)/4}  \equiv i\;\bmod \pi ^2 $, then $-g$ also becomes a generator
of $ G_{\pi ^2 }^ * $ such that $ ( - g)^{\phi (p^2 )/4} \equiv -
i\;\bmod \pi ^2 $.
\end{proposition}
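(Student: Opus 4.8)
The plan is to work entirely inside the cyclic group $G_{\pi^2}^*$, which by Theorem 1 and Proposition 1 is isomorphic to $Z_{p^2}^*$ and hence cyclic of order $N := \phi(p^2) = p(p-1)$. Since $g$ generates this group, every unit is a power of $g$, so the whole argument should reduce to writing $-g$ as $g^m$ for a suitable $m$ and then tracking $m$ modulo $N$. Accordingly, the first step I would carry out is to locate $-1$ inside $\langle g\rangle$: because $N(\pi^2)=p^2>4=N(2)$ we have $\pi^2\nmid 2$, so $-1$ is a unit of order exactly $2$, and in a cyclic group of even order $N$ the unique involution is $g^{N/2}$. Thus $-1\equiv g^{N/2}\pmod{\pi^2}$ and therefore $-g\equiv g^{N/2+1}\pmod{\pi^2}$.

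Next I would settle that $-g$ is again a generator, which amounts to showing $\gcd(N/2+1,\,N)=1$. Here the hypothesis $p=4n+1$ enters: it gives $p-1=4n$, so $N/2=p(p-1)/2=2np$ is even and $N/2+1$ is odd. Any common divisor $d$ of $N/2+1$ and $N$ divides $2(N/2+1)-N=2$, and being a divisor of the odd number $N/2+1$ it must equal $1$. I expect this part to go through cleanly for every admissible $p$, so the generator claim should not be the difficulty.

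The final step is the sign statement. Using the hypothesis $g^{N/4}\equiv i$, I would compute $(-g)^{N/4}\equiv(-1)^{N/4}g^{N/4}\equiv(-1)^{N/4}\,i\pmod{\pi^2}$, so the desired conclusion $(-g)^{N/4}\equiv -i$ is equivalent to $(-1)^{N/4}=-1$, i.e. to $N/4=p(p-1)/4=pn$ being odd. This is exactly where I expect the main obstacle to sit. Since $p$ is odd, the parity of $pn$ coincides with the parity of $n$, and nothing in the hypotheses appears to pin $n$ down (for instance $p=5$ yields $n=1$ odd, whereas $p=17$ yields $n=4$ even). I would therefore examine whether the intended statement implicitly restricts $n$ to odd values, or whether the offending factor $(-1)^{N/4}$ can be reinterpreted; without such a restriction the sign in the conclusion cannot be forced to be $-i$, so controlling the parity of $N/4$ is the crux on which the proposition really turns.
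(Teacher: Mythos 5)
Your reduction is the right one, and for the only step the paper actually writes out it coincides with the paper's: the published proof is a single sentence asserting that $g^{\phi(p^2)/4}\equiv i$ forces $(-g)^{\phi(p^2)/4}\equiv -i$ ``since $\phi(p^2)=4n(4n+1)$ and $n$ is an odd integer'' --- exactly your observation that the sign is governed by $(-1)^{\phi(p^2)/4}=(-1)^{n(4n+1)}$. The difference is that the paper simply \emph{declares} $n$ odd, whereas you correctly note that nothing in the hypotheses guarantees this: $p=17=4\cdot 4+1=4^2+1^2$ has $n=4$ even, so $\phi(p^2)/4=68$ is even, $(-g)^{\phi(p^2)/4}\equiv +i\bmod \pi^2$, and the stated conclusion fails. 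So the obstacle you flag at the end is a genuine defect of the proposition as stated, not of your argument; the result holds only under the additional hypothesis that $n$ is odd (equivalently $p\equiv 5\bmod 8$), or with the conclusion weakened to $(-g)^{\phi(p^2)/4}\equiv(-1)^{n}i\bmod\pi^2$. Your verification that $-g$ is again a generator (writing $-1\equiv g^{N/2}$ and checking $\gcd(N/2+1,N)=1$ using that $N/2+1$ is odd) is correct and is more than the paper supplies, since its one-line proof does not address the generator claim at all.
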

\begin{proof}$ g^{\phi (p^2 )/4}  \equiv i\;\bmod \pi ^2 $ implies
that$ ( - g)^{\phi (p^2 )/4}  \equiv  - i\;\bmod \pi ^2 $ since $
\phi (p^2 ) = 4n(4n + 1) $ and $n$  is an odd integer.
\end{proof}

\begin{theorem} Let $p > 2 $ is a prime in $Z$  and $ \pi  = a + ib
$ is a prime in  $G$ such that $ p = a^2  + b^2  = 4n + 1 $ ($
a,b,n \in Z $) , then cyclic codes of length $ \phi (p^2 )/4 $ and
$ \phi (p^2 )/2 $ are generated over the ring  $ G_{\pi ^2 } $
whose the generator polynomial are of the first and second degree,
respectively.
\end{theorem}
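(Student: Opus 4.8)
The plan is to realize both codes explicitly as ideals in $G_{\pi^2}[x]/(x^N-1)$ and to read off the degree of the generator from a count of weight-one error patterns. First I would record the structural input supplied by Propositions 1--3: by Theorem 1 the real and imaginary parts of $\pi^2$ are coprime, so $G_{\pi^2}\cong Z_{p^2}$ and its unit group $G_{\pi^2}^*$ is cyclic of order $\phi(p^2)=4n(4n+1)$; fixing a generator $g$ we have $g^{\phi(p^2)/4}\equiv i$, while the companion generator $-g$ satisfies $(-g)^{\phi(p^2)/4}\equiv -i$. This last relation is exactly the statement that $N_1:=\phi(p^2)/4=n(4n+1)$ is odd, a fact I will use repeatedly.

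For the length-$N_1$ code I would set $\xi=g^4$. Since $4\mid\phi(p^2)$, the order of $\xi$ is exactly $\phi(p^2)/4=N_1$, so $\xi^{N_1}=1$ and, as $x-\xi$ is monic, the factor theorem gives $(x-\xi)\mid(x^{N_1}-1)$ in $G_{\pi^2}[x]$; the ideal it generates is a cyclic code of length $N_1$ with a degree-one generator. To identify this as the Mannheim single-error-correcting code I would compute the syndrome $e\xi^{\,j}$ of a weight-one error of value $e\in\{1,i,-1,-i\}=\{g^{0},g^{N_1},g^{2N_1},g^{3N_1}\}$ at position $j$, and show that $(e,j)\mapsto e\xi^{\,j}=g^{\,mN_1+4j}$ is a bijection from the $4N_1=\phi(p^2)$ error patterns onto the units of $G_{\pi^2}$. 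This reduces to checking that $(m,j)\mapsto mN_1+4j$ is a bijection modulo $4N_1$, which holds precisely because $\gcd(N_1,4)=1$, that is, because $N_1$ is odd.

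For the length-$N_2$ code, with $N_2:=\phi(p^2)/2$, the key is a counting mismatch: there are $4N_2=2\phi(p^2)$ weight-one patterns but only $\phi(p^2)$ possible unit values of a single parity check, so one check cannot separate them and a degree-one generator cannot suffice. Concretely, with $\eta=g^{2}$ (of order $N_2$) the relation $-1=\eta^{N_1}$ forces $-\eta^{\,j}=\eta^{\,j+N_1}$, so the single-check syndrome is two-to-one. I would therefore adjoin the second root $-\eta=\eta^{\,N_1+1}$ and take the generator $(x-\eta)(x+\eta)=x^{2}-\eta^{2}$; since $N_2$ is even both $\eta$ and $-\eta$ are $N_2$-th roots of unity, and being coprime (their difference $2\eta$ is a unit) their product divides $x^{N_2}-1$, giving a degree-two generator. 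Finally I would verify that the two-component syndrome $(e\eta^{\,j},\,e(-1)^{j}\eta^{\,j})$ is injective on weight-one errors: any collision would have to pair two errors whose values differ by the factor $-1$ and whose positions differ by $N_1$, and the second coordinate rules this out precisely because $N_1$ is odd, so $j$ and $j+N_1$ have opposite parity.

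The main obstacle is not the algebra of roots but the two injectivity verifications, and in particular confirming that degree two is \emph{forced} rather than merely convenient at length $N_2$; both rest on the single arithmetic fact that $N_1=\phi(p^2)/4$ is odd, which is why Proposition 3 (equivalently, $n$ odd) is the load-bearing hypothesis. A secondary point to treat with care is that $p\mid N_1$ and $p\mid N_2$, so $x^{N}-1$ is inseparable over $Z_{p^2}$ and the field factorization is unavailable; I would sidestep this entirely by arguing through the monic factor theorem and the coprimality of $x-\eta$ and $x+\eta$, never through separability.
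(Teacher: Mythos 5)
Your proposal is correct, but it takes a genuinely different route from the paper. The paper's proof realizes the two codes constacyclically: it takes the generator $g$ of $G_{\pi^2}^*$ itself as the root, factors $x^{\phi(p^2)/4}-i$ as $(x-g)Q(x)$ and $x^{\phi(p^2)/2}+1$ as $(x-g)(x+g)A(x)$, and stops there --- the ``cyclic'' shift is the Mannheim-metric shift-and-rotate of Huber, with moduli $x^{N}-i$ and $x^{N}+1$ rather than $x^{N}-1$. You instead pass to the honest cyclic setting $G_{\pi^2}[x]/(x^{N}-1)$ by replacing the roots with $\xi=g^{4}$ (order $\phi(p^2)/4$) and $\pm\eta=\pm g^{2}$ (order $\phi(p^2)/2$), which arguably matches the theorem's literal wording better; the two constructions are related by the substitution $x\mapsto gx$ and are monomially equivalent. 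Your version buys three things the paper does not supply: (i) an actual justification that the degree-two product divides the modulus over the non-field ring $G_{\pi^2}$ --- the coprimality of $x-\eta$ and $x+\eta$ via the unit $2\eta$ --- where the paper merely asserts the factorization $(x-g)(x+g)A(x)$ and where, as you note, separability arguments are unavailable since $p\mid N$; (ii) the syndrome-injectivity computations showing the codes correct all Mannheim-weight-one errors, which the paper relegates to the worked examples; and (iii) the counting argument that degree two is forced at length $\phi(p^2)/2$, which the paper does not address at all. One caution: you correctly identify that everything beyond the bare existence of the degree-one and degree-two generators hinges on $N_1=\phi(p^2)/4=n(4n+1)$ being odd, i.e.\ on $n$ odd; the paper's Proposition 3 simply asserts ``$n$ is an odd integer'' even though neither its statement nor Theorem 2's includes this as a hypothesis (it fails for $p\equiv 1\bmod 8$, e.g.\ $p=17$), so your proof inherits exactly the same unstated restriction as the paper's --- flagging it as load-bearing, as you do, is the right call.
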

\begin{proof}There is an element $g$  of $G_{\pi ^2 } $ and $G_{\pi ^2 }^ *
$ is generated by $g$  since $ Z_{p^2 } $ is isomorphic to $G_{\pi
^2 } $. We know that $ g^{\phi (p^2 )/4}  \equiv i\;\bmod \pi ^2 $
implies that $ ( - g)^{\phi (p^2 )/4}  \equiv  - i\;\bmod \pi ^2 $
from Proposition 3. Hence  $ x^{{{\phi (p^2 )} \mathord{\left/
 {\vphantom {{\phi (p^2 )} 4}} \right.
 \kern-\nulldelimiterspace} 4}}  - i
$ and $ x^{{{\phi (p^2 )} \mathord{\left/
 {\vphantom {{\phi (p^2 )} 4}} \right.
 \kern-\nulldelimiterspace} 4}}  + i
$ are factored as $ (x - g)Q(x) $ $ \bmod \pi ^2 $ for $ x = g $
and $ (x + g)R(x)\;\bmod \pi ^2 $ for $ x = -g $, respectively,
where $ Q(x) $ and $ R(x) $ are the polynomials in the
indeterminate $X$  with coefficients in $ G_{\pi ^2 } $. Moreover,
$ x^{{{\phi (p^2 )} \mathord{\left/
 {\vphantom {{\phi (p^2 )} 2}} \right.
 \kern-\nulldelimiterspace} 2}}  + 1
$ can be factored as $ (x - g)(x + g)A(x) $ $ \bmod \pi ^2 $,
where $ A(x) $ is the polynomials in the indeterminate $X$ with
coefficients in $ G_{\pi ^2 } $.Furthermore all components of any
row of generator matrix do not consist of zero divisors since the
generator polynomial would be selected as a monic polynomial.
\end{proof}\\

We now explain how to construct cyclic codes over the other finite
rings.\\

Denote $ Z_n^ * $ by the set of multiplicative inverse elements of
$ Z_n $. If $ k \ge 1 $ and $ \left. k \right|n $ then the set $
Z_n^ *  (k) $ is a subgroup of $ Z_n^ * $, where $ Z_n^* (k) =
\left\{ {x \in Z_n^* :x \equiv 1\begin{array}{*{20}c}
   {\bmod k}  \\
\end{array}} \right\}$. If  $s$ and $t$  are relatively prime numbers,
then $ Z_{st}^ *  (s) \cong Z_t^ *  ,Z_{st}^ *  (t) \cong Z_s^ *
$.
\begin{proposition}Let $p_1 $ and $p_2 $ be odd primes and let $\pi _1  = a + bi
$ and $ \pi _2  = c + di $ be prime Gaussian integers, where $ p_1
\ne p_2 $ and $ p_1  = a^2  + b^2  = 4n_1  + 1 $ and $ p_2  = c^2
+ d^2  = 4n_2  + 1 $ ($ a,b,c,d,n_1 ,n_2  \in Z $). If $ \pi _1 $
and $ \pi _2 $ are Gaussian integers, there exist any elements
$e$, $f$ of $ G_{\pi _1 \pi _2 }^ * $ satisfying $ e^{\phi (p_2 )}
\equiv 1\;\bmod \pi _1 \pi _2 $ and $ f^{\phi (p_1 )}  \equiv
1\;\bmod \pi _1 \pi _2 $.

\end{proposition}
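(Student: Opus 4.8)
The plan is to reduce the whole statement to the Chinese Remainder Theorem together with Theorem~\ref{thm:one}. First I would observe that $\pi_1$ and $\pi_2$ are non-associate Gaussian primes: their norms $N(\pi_1)=p_1$ and $N(\pi_2)=p_2$ are distinct primes, so no unit of $Z[i]$ can carry one onto the other, and hence $\pi_1$ and $\pi_2$ are coprime in the Euclidean domain $Z[i]$. This coprimality is exactly what is needed to apply the Chinese Remainder Theorem to the ideal $\langle \pi_1\pi_2\rangle$, giving a ring isomorphism $G_{\pi_1\pi_2}=Z[i]/\langle\pi_1\pi_2\rangle \cong Z[i]/\langle\pi_1\rangle \times Z[i]/\langle\pi_2\rangle = G_{\pi_1}\times G_{\pi_2}$.

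Next I would identify the two factors. For a Gaussian prime $a+bi$ with $a^2+b^2$ prime, the integers $a$ and $b$ are automatically relatively prime (a common prime divisor of $a$ and $b$ would square-divide $p$), so Theorem~\ref{thm:one} applies to each of $\pi_1$ and $\pi_2$ and yields $G_{\pi_1}\cong Z_{p_1}$ and $G_{\pi_2}\cong Z_{p_2}$. Chaining these with one more application of CRT for the coprime integers $p_1,p_2$ gives the ring isomorphism $G_{\pi_1\pi_2}\cong Z_{p_1}\times Z_{p_2}\cong Z_{p_1p_2}$. Passing to unit groups, which is functorial under ring isomorphisms, produces $G_{\pi_1\pi_2}^*\cong Z_{p_1}^*\times Z_{p_2}^*\cong Z_{p_1p_2}^*$, a group of order $\phi(p_1)\phi(p_2)=\phi(p_1p_2)$.

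With this picture in hand the elements $e$ and $f$ are produced by the subgroup construction recalled just before the statement. The subgroup $Z_{p_1p_2}^*(p_1)=\{x\in Z_{p_1p_2}^*:x\equiv 1 \bmod p_1\}$ is isomorphic to $Z_{p_2}^*$ and therefore has order $\phi(p_2)$; by Lagrange every one of its elements $x$ satisfies $x^{\phi(p_2)}\equiv 1 \bmod p_1p_2$. Transporting such an element (for definiteness, the image of a generator of $Z_{p_2}^*$, so that its order is exactly $\phi(p_2)$) back through the isomorphism $Z_{p_1p_2}^*\cong G_{\pi_1\pi_2}^*$ yields an $e\in G_{\pi_1\pi_2}^*$ with $e^{\phi(p_2)}\equiv 1 \bmod \pi_1\pi_2$. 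Symmetrically, the subgroup $Z_{p_1p_2}^*(p_2)\cong Z_{p_1}^*$ supplies an $f$ of order $\phi(p_1)$ with $f^{\phi(p_1)}\equiv 1 \bmod \pi_1\pi_2$.

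I expect no deep obstacle here; the work is entirely in assembling the isomorphisms correctly. The one point that must be handled with care is the justification that $\pi_1,\pi_2$ are genuinely coprime and that Theorem~\ref{thm:one} is legitimately applicable to each factor (the automatic coprimality of the real and imaginary parts), since the whole argument hinges on the ring identification $G_{\pi_1\pi_2}\cong Z_{p_1p_2}$. It is also worth noting that the literal assertion ``there exist elements'' is weak---$e=f=1$ would satisfy the displayed congruences---so the substantive content, and what the ensuing code construction actually requires, is that $e$ and $f$ can be chosen of exact orders $\phi(p_2)$ and $\phi(p_1)$; the argument above delivers these.
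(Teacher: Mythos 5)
Your proof is correct and follows essentially the same route as the paper: establish the ring isomorphism $G_{\pi _1 \pi _2 } \cong Z_{p_1 p_2 }$ and then take $e$ and $f$ to be generators of the cyclic subgroups $Z_{p_1 p_2 }^ *  (p_1 ) \cong Z_{p_2 }^ *$ and $Z_{p_1 p_2 }^ *  (p_2 ) \cong Z_{p_1 }^ *$, which have orders $\phi (p_2 )$ and $\phi (p_1 )$. The only real difference is that the paper obtains the isomorphism by applying Theorem~1 directly to the product $\pi _1 \pi _2$, whereas you derive it from the Chinese Remainder Theorem in $Z[i]$ together with Theorem~1 on each factor; your route is in fact the more careful one, since the paper never checks Theorem~1's hypothesis (coprimality of the real and imaginary parts) for the product $\pi _1 \pi _2$.
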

\begin{proof}Let $p_1 $ and $p_2 $ be distinct odd primes. Then $p_1 $ and $p_2
$ are relatively primes. Since $s$  and $t$  are relatively prime
numbers, $p_1 $ and $p_2 $ can be selected as $s$  and $t$,
respectively, that is,  $ s = a^2  + b^2  = 4n_1  + 1 $ and $ t =
c^2  + d^2  = 4n_2  + 1 $. Using (\ref{eq:1}), we have $ Z_s \cong
G_{\pi _1 } $ and $ Z_t  \cong G_{\pi _2 } $. It is clear that
$Z_{s.t}  \cong G_{\pi _1 .\pi _2 } $ from Theorem 1. Thus, we
have $ G_{\pi _1 .\pi _2 }^ *  (\pi _1 ) \cong Z_{s.t}^ *  (s)
\cong Z_t^ *   \cong G_{\pi _2 }^ * $. $ G_{\pi _2 }^ * $ is a
cyclic group because $ \pi _2 $ is a prime Gaussian integer. So, $
G_{\pi _1 .\pi _2 }^ *  (\pi _1 ) $ has a generator. Let's call
this generator $e$ . Then $ e^{\phi (p_2 )}  \equiv
1\;\begin{array}{*{20}c}
   {\bmod \pi _1 }  \\
\end{array}.\pi _2$. In the similar way, $G_{\pi _1 .\pi _2 }^ *  (\pi _2 )
$ has a generator, let's call $f$ . Then $ f^{\phi (p_1 )} \equiv
1\;\begin{array}{*{20}c}
   {\bmod \pi _1 }  \\
\end{array}.\pi _2$ since $G_{\pi _1 .\pi _2 }^ *  (\pi _2 ) $ is  isomorphic
to $ G_{\pi _1 }^ * $.
\end{proof}
\begin{proposition}Let $p_k $ be $k$ distinct prime integer, where
$p_k $  is prime odd integer, $ \pi _k  = a_k  + ib_k $ and $ p_k
= a_k^2  + b_k^2 = 4n_k  + 1 $ for $ k = 1,2,...,m $.  . So, there
exists an element $ e_k $ of $ G_{\pi _1 .\pi _2 ...\pi _m }^ * $
such that $ e_k^{\phi (p_k )}  \equiv 1\;\bmod \pi _1 .\pi _2
...\pi _m $.
\end{proposition}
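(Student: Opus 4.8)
The plan is to reduce the $m$-prime statement to the two-prime case already settled in Proposition 4, by grouping the Gaussian primes into a single chosen factor and its complement. Since $p_1, p_2, \ldots, p_m$ are distinct odd primes, they are pairwise relatively prime, so for each fixed $k$ the integer $p_k$ is relatively prime to the product $P_k := \prod_{j \neq k} p_j$. First I would record, exactly as in the proof of Proposition 4, that equation (\ref{eq:1}) gives $Z_{p_j} \cong G_{\pi_j}$ for every $j$, and that repeated application of Theorem 1 yields $Z_{p_1 p_2 \cdots p_m} \cong G_{\pi_1 \pi_2 \cdots \pi_m}$; here one uses that the real and imaginary parts of each $\pi_j$ are relatively prime and that the norms multiply, so the Gaussian integer $\Pi := \pi_1 \pi_2 \cdots \pi_m$ has norm $P := p_1 p_2 \cdots p_m$.

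Next, fix $k$ and set $\Pi_k := \prod_{j \neq k} \pi_j$, the product of all the Gaussian primes except $\pi_k$, so that $\Pi_k$ has norm $P_k$. I would then consider the subgroup $G_{\Pi}^*(\Pi_k)$ consisting of the units of $G_\Pi$ that are congruent to $1$ modulo $\Pi_k$. Grouping the factorization $P = p_k \cdot P_k$ into the two relatively prime pieces $s = p_k$ and $t = P_k$ and invoking the stated isomorphism $Z_{st}^*(t) \cong Z_s^*$ together with $Z_P \cong G_\Pi$, I obtain the chain $G_\Pi^*(\Pi_k) \cong Z_P^*(P_k) \cong Z_{p_k}^* \cong G_{\pi_k}^*$. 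This is the exact analogue of the computation $G_{\pi_1 \pi_2}^*(\pi_1) \cong Z_{st}^*(s) \cong Z_t^* \cong G_{\pi_2}^*$ used in Proposition 4, only with the roles of the retained and complementary factors rearranged so as to isolate $\pi_k$.

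Finally, since $\pi_k$ is a prime Gaussian integer, $G_{\pi_k}^*$ is a cyclic group, and hence the isomorphic subgroup $G_\Pi^*(\Pi_k)$ is also cyclic. Let $e_k$ be a generator of $G_\Pi^*(\Pi_k)$. Its order equals $|G_{\pi_k}^*| = \phi(p_k) = p_k - 1$, so $e_k^{\phi(p_k)} \equiv 1 \bmod \Pi$, which is precisely the desired conclusion $e_k^{\phi(p_k)} \equiv 1 \bmod \pi_1 \pi_2 \cdots \pi_m$.

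The step I expect to be the main obstacle is the $m$-fold extension of the Chinese-Remainder-type ingredients: the isomorphism $Z_{st}^*(t) \cong Z_s^*$ is quoted in the paper only for two coprime factors, so I must justify that grouping the $m$ primes into the single factor $p_k$ and the product $P_k$ is legitimate. This reduces to checking that $\gcd(p_k, P_k) = 1$, which is immediate from distinctness, and that $Z_P \cong G_\Pi$ survives the grouping, which follows by induction on $m$ from Theorem 1. Once these bookkeeping points are settled, the argument is a verbatim repetition of the one for Proposition 4.
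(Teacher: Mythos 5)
Your proposal is correct and matches the paper's intent: the paper's entire proof is the single sentence ``This is immediate from Proposition 4,'' and your argument --- grouping the primes into $p_k$ and the coprime complement $P_k$, then rerunning the isomorphism chain $G_\Pi^*(\Pi_k)\cong Z_P^*(P_k)\cong Z_{p_k}^*\cong G_{\pi_k}^*$ --- is exactly the natural way to make that reduction precise. You have simply supplied the details the authors omitted.
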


\begin{proof} This is immediate from Proposition 4.
\end{proof}

\begin{theorem}Let $p_1 $ and $p_2 $ be distinct prime odd integers in $Z$  and
let $ \pi _1  = a + bi $ and $ \pi _2  = c + di $ be prime
Gaussian integers in $G$, where $ p_1  = a^2  + b^2  = 4n_1  + 1
$, $ p_2  = c^2  + d^2  = 4n_2  + 1 $, $ n_1 ,n_2  \in Z $. Then
there exists a cyclic code of  length $ \phi (p_1 ) $ and $ \phi
(p_2 ) $ over the ring $ G_{\pi _1 .\pi _2 } $. The generator
polynomial of this cyclic code is a first degree monic polynomial.
\end{theorem}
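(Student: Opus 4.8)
The plan is to reproduce the mechanism of Theorem 2, but now the elements furnished by Proposition 4 satisfy an honest identity of the form (element)$^{\phi}\equiv 1$ rather than $g^{\phi/4}\equiv i$, and this is exactly what converts the constacyclic construction into a genuinely cyclic one. First I would set $R:=G_{\pi_1\pi_2}$ and record, via Theorem 1, that $R\cong Z_{p_1 p_2}$ is a finite commutative ring; in particular congruence ``$\bmod\,\pi_1\pi_2$'' is equality in $R$, so that the polynomial ring $R[x]$ and its quotients are well defined and polynomial arithmetic makes sense.

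Next I would apply Proposition 4 to fix $e,f\in G_{\pi_1\pi_2}^*$ with $e^{\phi(p_2)}\equiv 1\,\bmod\,\pi_1\pi_2$ and $f^{\phi(p_1)}\equiv 1\,\bmod\,\pi_1\pi_2$. Read inside $R$, these identities say precisely that $e$ is a root of $x^{\phi(p_2)}-1$ and $f$ is a root of $x^{\phi(p_1)}-1$. Because $x-e$ is \emph{monic}, division by it is legitimate over the commutative ring $R$, so I may write $x^{\phi(p_2)}-1=(x-e)\,Q(x)+r$ with $Q(x)\in R[x]$ and constant remainder $r$; evaluating at $x=e$ gives $r=e^{\phi(p_2)}-1=0$. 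Hence $(x-e)\mid(x^{\phi(p_2)}-1)$ in $R[x]$, and symmetrically $(x-f)\mid(x^{\phi(p_1)}-1)$.

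It then follows that the ideal generated by $x-e$ in $R[x]/(x^{\phi(p_2)}-1)$ is a cyclic code of length $\phi(p_2)$ over $R$ with first-degree monic generator polynomial $x-e$, and likewise $x-f$ generates a cyclic code of length $\phi(p_1)$; this is the claim. As in Theorem 2 I would add the remark that, the generator polynomial being monic, no row of the generator matrix is assembled from zero divisors, so the construction remains meaningful even though $R\cong Z_{p_1 p_2}$ possesses zero divisors.

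The one step deserving genuine care — and the place where working over a ring rather than a field matters — is the divisibility $(x-e)\mid(x^{\phi(p_2)}-1)$. Over a field this is the familiar factor theorem; over $R$ one must invoke that division by a monic polynomial always yields a well-defined quotient together with a constant remainder, after which the vanishing of that remainder is immediate from Proposition 4. The remaining points (monicity, first degree, and the two symmetric cases for $p_1$ and $p_2$) are routine bookkeeping.
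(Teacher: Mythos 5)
Your proposal follows essentially the same route as the paper's own proof: invoke Proposition 4 to obtain the root $e$ (resp.\ $f$), factor $x^{\phi(p_2)}-1$ as $(x-e)D(x)$ over $G_{\pi_1\pi_2}$, take $g(x)=x-e$ as the first-degree monic generator, and close with the same remark about rows of the generator matrix and zero divisors. Your only additions are welcome but minor refinements — explicitly justifying the factor theorem over the ring via division by a monic polynomial, and spelling out the symmetric case for length $\phi(p_1)$, which the paper leaves implicit.
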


\begin{proof}From Proposition 4, $x^{\phi (p_2 )}  - 1 $ can be factored as
$ (x - e).D(x)\bmod \pi _1 \pi _2 $ since $ e^{\phi (p_2 )} \equiv
1\;\begin{array}{*{20}c}
   {\bmod \pi _1 }  \\
\end{array}.\pi _2$. If we take the generator polynomial as $
g(x) = x - e $, then the generator polynomial $g(x)$ generates the
generator matrix, whose any row of all components do not consist
of zero divisors.
\end{proof}\\

To illustrate the construction of cyclic codes over some finite
rings, we consider examples as follows.

\begin{example}The polynomial $ x^{10}  + 1 $ factors over the ring
$ G_{3 + 4i} $ as $ (x - 2).(x - 1 + i)A(x)$, where $A(x)$ is a
polynomial in $ G_{3 + 4i} \left[ X \right] $. If  the generator
polynomial $g(x)$  is taken as $ x^2  + (1 - 2i)x + ( - 2 + i) $,
then the generator matrix and the parity check matrix are as
follows, respectively.\\$$ G = \left( {\begin{array}{*{20}c}
   { - 2 + i} & {1 - 2i} & 1 & 0 & 0 & 0 & 0 & 0 & 0 & 0  \\
   0 & { - 2 + i} & {1 - 2i} & 1 & 0 & 0 & 0 & 0 & 0 & 0  \\
   0 & 0 & { - 2 + i} & {1 - 2i} & 1 & 0 & 0 & 0 & 0 & 0  \\
   0 & 0 & 0 & { - 2 + i} & {1 - 2i} & 1 & 0 & 0 & 0 & 0  \\
   0 & 0 & 0 & 0 & { - 2 + i} & {1 - 2i} & 1 & 0 & 0 & 0  \\
   0 & 0 & 0 & 0 & 0 & { - 2 + i} & {1 - 2i} & 1 & 0 & 0  \\
   0 & 0 & 0 & 0 & 0 & 0 & { - 2 + i} & {1 - 2i} & 1 & 0  \\
   0 & 0 & 0 & 0 & 0 & 0 & 0 & { - 2 + i} & {1 - 2i} & 1  \\
\end{array}} \right)
,$$\\ $$ H = \left( {\begin{array}{*{20}c}
   1 & { - (1 - 2i)} & { - ( - 2 + i)} & { - (2 + i)} & { - (1 + i)} & { - (2 + i)} & {3i} & { - 1 + i} & {2i} & 0  \\
   0 & 1 & { - (1 - 2i)} & { - ( - 2 + i)} & { - (2 + i)} & { - (1 + i)} & { - (2 + i)} & {3i} & { - 1 + i} & {2i}  \\
\end{array}} \right)\\
.$$ Let us assume that at the receiving end we get the vector $ r
= \left( {\begin{array}{*{20}c}
   { - 2 + i} & {1 - 2i} & 1 & i & 0 & 0 & 0 & 0 & 0 & 0  \\
\end{array}} \right)$. First we compute the syndrome $S$  as follows:
$$S = \frac{{ix^3  + x^2  + (1 - 2i)x + ( - 2 + i)}}{{x^2  + (1 -
2i)x + ( - 2 + i)}} = (1 + 2i)x + (2 + i) .$$ Therefore, from
Table I, it is seen that the syndrome $ S \equiv ix^3 $. Notice
that first we compute the syndrome of the received vector to be
decoded. If this syndrome disappears in Table I, then its
associates check. Thus, the received vector $r$  is decoded as $
c(x) = r(x) - ix^3  = x^2  + (1 - 2i)x + ( - 2 + i) $. Finally we
get\\ $ c = \left( {\begin{array}{*{20}c}
   { - 2 + i} & {1 - 2i} & 1 & 0 & 0 & 0 & 0 & 0 & 0 & 0  \\
\end{array}} \right)
$.

\end{example}

\begin{example} Let $p_1  = 5 $, $p_2  = 13 $. Let the generator polynomial
$g(x)=x-3-i$ and the parity check polynomial $ h(x) = [x^3  + (3 +
i)x^2  + (4 - i)x + 2 - 2i] $ be. Then we obtain the generator
matrix $G$  and parity check matrix $H$  as follows, respectively.
$$
G = \left( {\begin{array}{*{20}c}
   { - 3 - i} & 1 & 0 & 0  \\
   0 & { - 3 - i} & 1 & 0  \\
   0 & 0 & { - 3 - i} & 1  \\
\end{array}} \right),
$$\\ $$
H = \left( {\begin{array}{*{20}c}
   1 & {3 + i} & {4 - i} & {2 - 2i}  \\
\end{array}} \right)
.$$\\ Assume that received vector is $ r = \left(
{\begin{array}{*{20}c}
   { - 3 - i} & 1 & i & 0  \\
\end{array}} \right)$. We compute the syndrome as $$
\frac{{r(x)}}{{g(x)}} = \frac{{ix^2  + x - 3 - i}}{{x - 3 - i}} =
(x - 3 - i)(ix + 3i) + (1 + 4i) .$$ Since $ 1 + 4i \equiv x^2 .i
$,  the vector $r(x)$  is decoded as $ c(x) = r(x) - ix^2  = x - 3
- i $. In Table II, coset leaders and its syndromes are given.

\end{example}

\begin{center}
{\scriptsize {Table 1: The coset leaders and its syndromes.}}
{\small \centering}
\begin{tabular}{|c|c|c|c|}
  \hline
  0     & 0                          & $x^6$  & $( - 2 - i)x + 3i$ \\

  1     & 1                          & $x^7$  & $
3ix + (2 + i)
$ \\
  $x$   & $x$                        & $x^8$  & $
( - 1 + 2i)x + 2i
$\\
  $x^2$ & $( - 1 + 2i)x + (2 - i)$   & $x^9$  & $
2ix + (1 - 2i)
$ \\
  $x^3$ & $(2 - i)x + (1 - 2i)$      & $x^10$ & $-1$ \\
  $x^4$ & $( - 2 - i)x + ( - 1 - i)$ & $x^11$ & $x^{11}  = x^{10} x
$ \\
  $x^5$ & $( - 1 - i)x + (2 + i)$    & $x^12$ & $
x^{12}  = x^{10} x^2
$ \\
  \hline
\end{tabular}

\end{center}

\begin{center}
{\scriptsize {Table 2: The coset leaders and its syndromes.}}
{\small \centering}

\begin{tabular}{|c|c|}
  \hline
  0      & 0 \\

  1      & 1 (and its associates) \\
  $x$    & $3+i$ (and its associates) \\
   $x^2$ & $4-i$ (and its associates) \\
   $x^3$ & $2-2i$ (and its associates)   \\
  \hline
\end{tabular}

\end{center}

\end{document}